\documentclass[submission,copyright,creativecommons]{eptcs}
\providecommand{\event}{LTT 2026}
\usepackage{amsmath,amssymb,amsthm,hyperref}

\def\RR{{\mathbb R}}
\def\NN{{\mathbb N}}

\usepackage{iftex}

\ifpdf
  \usepackage{underscore}         
  \usepackage[T1]{fontenc}        
\else
  \usepackage{breakurl}           
\fi

\newcommand{\ba}{\begin{array}} \newcommand{\ea}{\end{array}}
\newfont{\bsl}{cmbxsl10 scaled 1095}

\newfont{\deu}{eufm10 scaled 1000}

\newcommand{\res}{|\!\raisebox{1mm}{$\scriptscriptstyle\setminus$}}

\newcommand{\aquant}{\forall}
\newcommand{\equant}{\exists}

\newcommand{\beq}{\begin{equation}}
\newcommand{\eeq}{\end{equation}}

\begin{document}

\title{On the Computational Content of
Moduli of Regularity and their Logical
Strength}

\author{Ulrich Kohlenbach\\ 
Department of Mathematics \\  Technische Universit\" at Darmstadt\\ 
Schlossgartenstra\ss{}e 7, 64289 Darmstadt, Germany \\ 
kohlenbach@mathematik.tu-darmstadt.de \\[4mm] Dedicated to Professor Stefano Berardi on the occasion of his 64th Birthday}

\newcommand{\titlerunning}{On the computational content of
moduli of regularity and their logical
strength}
\newcommand{\authorrunning}{Ulrich Kohlenbach}

\maketitle 
\begin{abstract} 
  We continue the investigation into the computational status
  of the existence of moduli of regularity (and their use for rates of
  convergence) in the sense of
  \cite{KohlenbachLopezNicolae(2019)},
  carried out w.r.t. classical reverse mathematics
  and Weihrauch degrees in \cite{Kohlenbach(moduli)},
  and determine the amount of LEM involved.
  We also show that the existence of a modulus of regularity
  always yields an algorithm for the computation of a zero in the case
  of continuous functions $F:K\to\RR$ on a compact metric space (in $F$
  equipped with a modulus of uniform continuity and $K$ given in standard
  representation) whenever such a zero exists. If $K\subset X$ is a compact
  subset of a uniformly convex Banach space $X$ and the zero set
  of $F$ is convex one can compute even the zero of minimal norm.
  A modulus of regularity can also be used to compute
  the left-most infinite path of an infinite 0/1-tree. We also show that
  there is no proof-theoretically tame nonstandard uniformity principle which
  would make it possible to replace in the regularity assumption
  compactness by metric boundedness and still guarantee
  classically correct bounds.
\end{abstract}
\newtheorem{definition}{Definition}[section]
\newtheorem{proposition}[definition]{Proposition}
\newtheorem{remark}[definition]{Remark}
\newtheorem{theorem}[definition]{Theorem}
\newtheorem{corollary}[definition]{Corollary}
\newtheorem{lemma}[definition]{Lemma}
\newtheorem{exercise}[definition]{Exercise}
\newtheorem{clm}[definition]{Claim}
\newtheorem{prop}[definition]{Proposition}
\newtheorem{example}[definition]{Example}
\newtheorem{notation}[definition]{Notation}
\newtheorem{application}[definition]{Application} 
\section{Introduction}
In this paper, we continue our investigation from \cite{Kohlenbach(moduli)}
into the computational and logical
strength of (the existence of) of the general form of (metric)
regularity introduced in \cite{KohlenbachLopezNicolae(2019)} as a unifying
concept of many related notions studied in continuous optimization. Roughly
speaking, the assumption of regularity of a solution set allows one to
conclude that a sufficiently good approximate solution must be close to an
actual solution. If the solution set is a singleton set
this has been studied under
the name of strong or uniform uniqueness (also with moduli) e.g. in
\cite{Kohlenbach(93)} (see also \cite{Kohlenbach(book)}).
The concept of regularity generalizes this to the
nonunique case. In the context of compact metric spaces $X$ and
continuous functions $F:X\to \RR$ the regularity of the set $\text{zer}\;F$ of
zeros of $F$ always holds by a result in \cite{KohlenbachLopezNicolae(2019)}.
However, as shown in \cite{Kohlenbach(moduli)}, one usually - in contrast
to the case of uniqueness where proof-theoretic techniques can be used to
extract an effective modulus of uniqueness - cannot hope for a computable
modulus of regularity. In fact, the existence of the latter is - even for
Lipschitz continuous functions $F:[0,1]\to \RR$ - equivalent to arithmetical comprehension ACA$_0$
while the $\forall\exists$-form of regularity (without a modulus) only
requires WKL$_0.$ The reason for this difference is that the
$\forall\exists$-form of regularity (while proof-theoretic being weak)
implies intuitionistically (and actually is equivalent to)
the $\Sigma^0_1$-law-of-excluded-middle principle
$\Sigma^0_1$-LEM: see Theorem \ref{regularity-LEM} below.\footnote{A
  particularly striking example of such a situation is given by Ramsey's
  theorem for pairs and two colors which - though proof-theoretically weak -
  implies (and is equivalent to) even the principle $\Sigma^0_3$-LLPO
  (see \cite{BerardiSteila(14)}) which by \cite{Akama} is strictly stronger
  than $\Sigma^0_2$-LEM. See also \cite{BerardiSteila(17)} for a
generalization to $k$-many colors.}   
When strengthened into a modulus,
this use of $\Sigma^0_1$-LEM then becomes even classically visible in the
form of  $\Sigma^0_1$-comprehension (and so - by iteration - as ACA$_0$).
\\ In \cite{KohlenbachLopezNicolae(2019)}
it is shown that a modulus of regularity
yields a rate of convergence whenever we have a Fej\'er monotone (w.r.t.
the solution set) algorithm for the computation of approximate solutions
together with an approximate solution bound for this algorithm.\footnote{For
a recent extension of this result to a generalized form of Fej\'er
monotonicity see \cite{KohlenbachPinto(Fejer)}.}
  In this paper
we show that in the situation above with a compact metric space $X$ one
unconditionally can construct a primitive recursive functional which
computes uniformly in a modulus of regularity, a standard representation of
$X$ and a name for $F$ (given by the restriction of $F$ to a countable dense
subset and a modulus of uniform continuity) a zero of $F$ provided that
$\text{zer}\;F\not=\emptyset$
(Theorem \ref{compact-metric-case}). \\[1mm]
As a special case one can subsume the problem of finding an infinite path
of an infinite binary (i.e. $0/1$-)tree. Here there is a Kalmar elementary
functional which uniformly in (the characteristic function of) such a tree
and a modulus of regularity (w.r.t. the set of infinite paths as solution set)
computes the leftmost branch of the tree (Theorem \ref{leftmostbranch}). 
\\[1mm] If $K\subset X$ is a compact subset
of a uniformly convex Banach
space $X,$ $F:K\to\RR$ is continuous and $\text{zer}\;F$
is convex (a situation which frequently occurs in convex optimization) then one can compute in the above
data augmented with a modulus of convexity of $X$ even a zero of minimal
norm (Theorem \ref{uniformly-convex-case}). \\[1mm] 
In proof mining one often can allow the use of nonstandard arguments
which replace a compactness assumption by  metric boundedness. The
uniform boundedness principle $\exists$-UB$^X$ introduced in
\cite{Kohlenbach(06)} (see also \cite{Kohlenbach(book)} and the connected
discussion in \cite{FerLeuPin})
systematically makes this possible
and can - though classically being false - be eliminated without any
complexity contribution from the verification of the bounds
extracted from proofs which make use of this principle. This raises the
question whether some combination of, say, arithmetical comprehension
(which is an admissible principle in the logical metatheorems on proof
mining if one allows for bar recursive bounds) with $\exists$-UB$^X$
(or some other `tame' nonstandard principle) implies regularity even
in the absence of compactness (here a principle is called `tame' if
it does not contribute to the complexity of extractable bounds).
In Proposition \ref{counterexample}
we show that this is not the case.
\section{Main Results}
\begin{definition}[\cite{KohlenbachLopezNicolae(2019)}]\label{regularity-zero}
Let $(X,d)$ be a metric space and let be $F:X\to \RR$ a mapping. Let 
$\text{zer}\; F:=\{ x\in X: F(x)=0\}\not=\emptyset$ and $r>0.$  
We say that $F$ is regular w.r.t. $\text{zer}\; F$ and $\overline{B}(z,r)$ for 
$z\in \text{zer}\; F$ if 
\[ \forall n \in\NN\,\exists k \in \NN \, \forall x\in \overline{B}(z,r) 
  \,\left( |F(x)|<2^{-k} \to\, \exists z'\in \text{zer}\;
    F \,(d(x,z')<2^{-n})\right). \]
If this holds with `$\forall x\in \overline{B}(z,r)$' replaced by 
`$\forall x\in X$' we say that $F$ is regular w.r.t. $\text{zer}\; F.$
\\ A function $\rho:\NN\to \NN$ providing given $n$ a number
$k=\rho(n)$ satisfying 
the above is called a modulus of regularity of $F$ w.r.t. $\text{zer}\; F$ and 
$\overline{B}(z,r)$ resp. w.r.t. $\text{zer}\; F$
(short: $\rho$ mreg $\text{zer}\;F,r$)
\end{definition}
Recall that a metric space is called `proper' (or `boundedly compact') if
every bounded and closed subset is compact.
\begin{proposition}[\cite{KohlenbachLopezNicolae(2019)}]\label{existence-of-modulus}
If $X$ is proper and $F$ is continuous, then for any $z \in \text{zer}\; F$ and $r > 0$, $F$ has a modulus of regularity w.r.t.  $\text{zer}\; F$ and $\overline{B}(z,r)$.
\end{proposition}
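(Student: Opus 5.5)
The plan is to argue by contradiction, using compactness of the closed ball $\overline{B}(z,r)$, which holds because $X$ is proper. Suppose the regularity statement fails for some fixed $n\in\NN$. Then for every $k\in\NN$ there is a point $x_k\in\overline{B}(z,r)$ with $|F(x_k)|<2^{-k}$ but $d(x_k,z')\ge 2^{-n}$ for all $z'\in\text{zer}\;F$. Since $\overline{B}(z,r)$ is closed and bounded, it is compact. It is also a metric space, so it is sequentially compact, and we may extract a subsequence $(x_{k_j})_j$ converging to some $x^*\in\overline{B}(z,r)$.

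First, continuity of $F$ gives
\[
F(x^*)=\lim_j F(x_{k_j})=0,
\]
since $|F(x_{k_j})|<2^{-k_j}\to 0$. Hence $x^*\in\text{zer}\;F$. Second, because $x_{k_j}\to x^*$, there is an index $j$ with $d(x_{k_j},x^*)<2^{-n}$. This contradicts the choice of $x_{k_j}$ with $z'=x^*$. So for every $n$ some $k$ works.

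A modulus $\rho$ is then obtained by choosing, for each $n$, such a $k$; for instance, the least one, which exists because $\NN$ is well-ordered. This uses only classical logic and no choice beyond taking least elements. Note also that the property is monotone in $k$, so any larger $k$ works as well.

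The only step needing care is the compactness step: properness must be applied to the closed bounded set $\overline{B}(z,r)$, and sequential compactness must be used to get a convergent subsequence. The argument is otherwise routine. The assumption $z\in\text{zer}\;F$ is needed only to ensure $\text{zer}\;F\neq\emptyset$, as Definition~\ref{regularity-zero} requires, and it plays no further role.
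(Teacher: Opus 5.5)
Your proof is correct and is essentially the argument of \cite{KohlenbachLopezNicolae(2019)}, which this paper only cites. You assume failure for some $n$, pick bad points $x_k\in\overline{B}(z,r)$, extract a convergent subsequence via properness, and use continuity of $F$ to make the limit a zero lying within $2^{-n}$ of some $x_{k_j}$, which is a contradiction. Taking $\rho(n)$ to be the least suitable $k$ is fine; just note that selecting the sequence $(x_k)$ itself uses countable choice, which is harmless in the classical setting here.
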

Let $(X,d)$ be compact metric space and $(a_n)$ be a sequence in $X$ and
$\alpha:\NN\to \NN$ both together witnessing the total boundedness of
$X,$ i.e.
\[ \forall x\in X\,\forall k\in\NN\,\exists \,0\le i\le
\alpha(k)\,\left(d(x,a_i)< 2^{-k} \right)\]
(compare $(TOT I)$ in \cite{KohlenbachLeusteanNicolae}). \\ 
Let $g:\NN^2\to\NN^{\NN}$ be such that for all $i,j\in\NN,$
$g(i,j)$ is a name (in the sense of \cite{Kohlenbach(book)}) of
$d(a_i,a_j).$ \\
Let $F:X\to\RR$ be a continuous function with a modulus $\omega:\NN\to\NN$
of uniform continuity, i.e.
\[ \forall k\in\NN\,\forall x,y\in X\,\left( d(x,y)<2^{-\omega(k)}\to
    d(F(x),F(y))<2^{-k}\right).\]
Let $h:\NN\to\NN^{\NN}$ be such that for each $i\in\NN,$ $h(i)$ is a name of
$F(a_i).$
\begin{theorem}\label{compact-metric-case}
  Let $(X,d),F$ be as above. One can define a primitive recursive functional
  (in the sense of Kleene's S1-S8 from {\rm \cite{Kleene(59)}}) $\Phi$ such that for all functions
  $\omega,\alpha,g,h,\rho,$ it holds for $\beta:=\Phi(\omega,\alpha,g,h,\rho):\NN\to\NN$ that $(a_{\beta(k)})_{k\in\NN}$ converges with rate $2^{-k}$ to a zero of $F$
  provided that $\text{zer}\;F\not=\emptyset,$ the functions
  $\omega,\alpha,g,h$ satisfy the above requirements and $\rho$ is
  a modulus of regularity $\rho$ for
  $\text{zer}\;F.$
\end{theorem}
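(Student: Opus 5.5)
The plan is to build $\beta$ by primitive recursion on $k$. At each stage we perform a \emph{bounded} search among the points $a_0,\dots,a_{\alpha(\cdot)}$. The search looks for a new point that has small $|F|$ value, so that regularity places it near a zero, and that is close to the previously chosen point, so that the sequence is Cauchy with rate. All bounds come from $\alpha,\omega,\rho$. All tests are decided on rational approximations read off from the names $g(i,j)$ and $h(i)$, with a safety margin.

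\emph{Key existence fact.} Suppose $x\in X$ satisfies $d(x,\text{zer}\,F)<2^{-n}$, i.e.\ there is a zero $z'$ with $d(x,z')<2^{-n}$. Put $m:=\rho(n+2)+2$. By total boundedness there is an $i\le\alpha(\omega(m))$ with $d(a_i,z')<2^{-\omega(m)}$. Uniform continuity then gives $|F(a_i)|<2^{-m}$. We may assume w.l.o.g.\ $\omega(m)\ge n$, replacing $\omega(m)$ by $\max(\omega(m),n)$. Then $d(a_i,x)<2^{-n+1}$. For the base case we use that $\text{zer}\,F\neq\emptyset$: with $n=0$ we just search for some $i\le\alpha(\omega(\rho(2)+2))$ with small $|F(a_i)|$.

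\emph{Recursive step.} Given $\beta(k)$ with $|F(a_{\beta(k)})|<2^{-\rho(k+2)}$, hence $d(a_{\beta(k)},\text{zer}\,F)<2^{-(k+2)}$ by the modulus of regularity, we let $\beta(k+1)$ be the least $i\le\alpha(\omega'(\rho(k+3)+2))$ for which two rational tests succeed. The first test is that the approximation of $|F(a_i)|$ to precision $2^{-\rho(k+3)-3}$, taken from $h(i)$, is $<2^{-\rho(k+3)-1}$. The second is that the approximation of $d(a_i,a_{\beta(k)})$ to precision $2^{-k-4}$, taken from $g(i,\beta(k))$, is $<2^{-k-1}-2^{-k-4}$. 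Here $\omega'$ is $\omega$ adjusted as above.

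By the existence fact applied with $n=k+2$, some $i$ in the range has true values $|F(a_i)|<2^{-\rho(k+3)-2}$ and $d(a_i,a_{\beta(k)})<2^{-k-1}-2^{-k-3}$, at least after shifting indices by a constant. Hence this $i$ passes both approximate tests, so the search succeeds. Conversely, any $i$ passing the tests satisfies $|F(a_i)|<2^{-\rho(k+3)}$ and $d(a_i,a_{\beta(k)})<2^{-k-1}$. This preserves the invariant and gives $d(a_{\beta(k)},a_{\beta(k+1)})<2^{-k-1}$. Since the search is bounded by terms built from $\alpha,\omega,\rho$ and the tests only read finitely many values of $g,h$, $\Phi$ is primitive recursive in the sense of Kleene S1--S8.

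\emph{Verification.} Summing the step bounds gives $d(a_{\beta(k)},a_{\beta(l)})<2^{-k}$ for $l>k$. Hence $(a_{\beta(k)})$ is Cauchy with rate $2^{-k}$ and converges in the compact, hence complete, space $X$ to some $x^*$ with $d(a_{\beta(k)},x^*)\le 2^{-k}$. A strict $<2^{-k}$ is obtained by one more index shift. Moreover $|F(a_{\beta(k)})|\to 0$, so $F(x^*)=0$ by continuity.

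I expect the main obstacle to be the careful bookkeeping of margins. The approximate tests must be sound, implying the true strict inequalities, and complete on the witness supplied by the existence fact. This requires the constant shifts in the indices of $\rho$ and $\omega$ and the precision choices above. The conceptual content is only the existence fact.
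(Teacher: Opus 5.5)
Your proposal is correct and is essentially the paper's proof. In both, $\beta$ is defined by primitive recursion with a bounded search below $\alpha(\cdot)$ for an $a_i$ passing rational tests (read off from $h(i)$ and $g(i,\beta(k))$) that certify $|F(a_i)|<2^{-\rho(k+3)}$ and $d(a_i,a_{\beta(k)})<2^{-k-1}$; the search must succeed because $\rho$ provides a zero within $2^{-k-2}$ of $a_{\beta(k)}$.

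Two points of bookkeeping need tightening. First, your second test requires the search bound $\alpha(\max\{\omega(\rho(k+3)+2),k+3\})$, not $\max(\omega(m),k+2)$; the paper takes $\max\{k+4,\omega(N+2)\}$. Second, $|F(a_{\beta(k)})|\to 0$ does not follow from $|F(a_{\beta(k)})|<2^{-\rho(k+2)}$ alone, since $\rho$ need not be unbounded; this is why the paper uses the invariant $|F(x_k)|<2^{-\max\{k,\rho(k+2)\}}$. Your own observation $d(a_{\beta(k)},\text{zer}\,F)<2^{-k-2}$, together with the closedness of $\text{zer}\,F$, nevertheless already yields $F(x^*)=0$.
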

{\bf Proof:} We show how to compute primitive recursively in
$\omega,\alpha,g,h,\rho$ satisfying the above requirements a function
$\beta:\NN\to\NN$ such that for $x_k:=a_{\beta(k)}$ 
\[ (*)\ \forall k\in\NN\,\left( |F(x_k)|<2^{-\max\{ k,\rho(k+2)\}}
    \wedge (k>0\rightarrow d(x_k,x_{k-1}) \le 2^{-k})\right). \]
$(*)$ clearly implies that $(x_k)$ is a Cauchy sequence with rate $2^{-k}$ since
for $m\ge n\ge k$
\[ d(x_m,x_n)\le\sum\limits^{m-1}_{k=n} d(x_k,x_{k+1}) \le
  \sum\limits^{m-1}_{k=n} 2^{-k-1}\le \sum\limits^{\infty}_{k=n} 2^{-k-1}
  \le 2^{-n}. \]
Since $F$ is continuous, $(*)$ - moreover - implies that $x:=\lim x_k$ is
a zero of $F.$ \\
We prove $(*)$ by induction on $k$ and simultaneously define $\beta:$
Let $k=0.$ Since $\text{zer}\;F\not=\emptyset,$ one can search primitive
recursively in
$\omega,\alpha,g,h,\rho$ for an $n$ with
\[ |F(a_n)|<2^{-\rho(2)} \]
so that $(*)$ holds with $\beta(0):=n$ (for definiteness we may stipulate
that $x_{-1}:=a_0$). Indeed, let $z\in \text{zer}\;F.$ There there exists an
$n\le \alpha(\omega(\rho(2)+2))$ such that
\[ d(a_n,z) < 2^{-\omega(\rho(2)+2)}. \]
Then in turn
\[ |F(a_n)|<2^{-\rho(2)-2} \] and so
for the $2^{-\rho(2)-2}$-rational approximation
$(F(a_n))(\rho(2) +2)$ of
$F(a_n)$ provided by $\widehat{h(n)}(\rho(2)+2)$ 
\[ (1) \ |(F(a_n))(\rho(2)+2)|<2^{-\rho(2)-1}. \]  
Since $(1)$ is a primitive recursively
decidable property, one can search primitive recursively
for the least $n\le \alpha(\omega(\rho(2)+2))$ such
that $(1)$ holds which in turn implies
\[ (2) \ | F(a_n)| <2^{-\rho(2)}. \]
$k\mapsto k+1, k\ge 0:$ By the induction hypothesis we have
defined already $\beta(0),\ldots,\beta(k)$ such that
\[ (3) \ k>0\rightarrow d(x_k,x_{k-1}) \le 2^{-k} \]
and
\[ (4) \ |F(x_k)|<2^{-\rho(k+2)}. \]
By $(4)$ we get
\[ (5)\  \exists z\in X\, \left( F(z)=0\wedge d(z,x_k) < 2^{-k-2}\right). \]
Similarly to the above argument
we can search primitive recursively in $\omega,\alpha,g,h,\rho,k$ and $\beta(k)$
for an $n_k$ such that
\[ d(a_{n_k},x_k) \le 2^{-k-1}\wedge | F(a_{n_k})|< 2^{-\max\{
    k+1,\rho(k+3)\}}. \]
Indeed: let $N:=\max\{ k+1,\rho(k+3)\}.$ Let $z$ be as in $(5).$
There exists an \\ $n_k\le
\alpha\left(\max\{ k+4,\omega(N+2)\}\right)$ such that
\[ d(a_{n_k},z) < 2^{-\max\{ k+4,\omega(N+2)\} }\] and so
\[ d(a_{n_k},x_k) <2^{-k-4}+2^{-k-2} \wedge | F(a_{n_k})|<2^{-N-2}. \]
Hence
\[ (6) \  (d(a_{n_k},x_k))(k+4) < 2^{-k-3}+2^{-k-2}\wedge
(|F(a_{n_k})|)(N+2)<2^{-N-1},  \]
where $(d(a_{n_k},x_k))(k+4)$ and $(F(a_{n_k}))(N+2)$ are the $2^{-k-4}$- and
$2^{-N-2}$- rational approximations of $d(a_{n_k},x_k)$
and $F(a_{n_k})$ provided by $\widehat{g(n_k,\beta(k))}(k+4)$ and
$\widehat{h(n_k)}(N+2)$ respectively.
As $(6)$ is a primitive recursively decidable property we can search
for the least $n_k\le \alpha\left(\max\{ k+4,\omega(N+2)\}\right)$
such that $(6)$ holds
which implies that 
\[ (7) \  d(a_{n_k},x_k) <2^{-k-1} \wedge | F(a_{n_k})|<2^{-N} .\]
Now take $\beta(k+1):=n_k.$ \hfill $\Box$ \\[2mm] 
Let $(X,\|\cdot \|)$ be a uniformly convex Banach space with modulus
of convexity $\eta:\NN\to\NN$ as in $(10^*)$ in \cite{Kohlenbach(book)} (p.
4.12). Let $K\subset X$ be a compact subset and $F:K\to \RR$ be
continuous with modulus $\omega$ of uniform continuity. Let $(a_n)$ be a dense
sequence in $K$ and $\alpha:\NN\to\NN$ be a modulus of total boundedness
for $K$ as above, $h:\NN\to\NN^{\NN}$ 
be a sequence of names for $(F(a_n))_{n\in\NN}$
and $g:\NN\to\NN^{\NN}$ be such that for each $n\in\NN,$ $g(n)$
is a name for $\| a_n\|.$ We now assume that
$C:=\text{zer}\;F$ is nonempty closed and
convex and that $\rho$ is
a modulus of regularity for $F$ w.r.t. $\text{zer}\;F.$ \\ It is well-known
that the metric projection of $X$ onto $C$ is well-defined and
single-valued. \\ 
Let $D\in\NN$ be an upper bound on the norm of some zero of $F.$
\begin{theorem}\label{uniformly-convex-case} 
There exists a primitive recursive functional in the sense of Kleene
$\Psi$ which uniformly in functions $\eta,\alpha,\omega,g,h,\rho$
and $D$ satisfying the above requirements
computes the unique
zero \\ $\Psi(\eta,\alpha,\omega,g,h,\rho,D)\in \text{zer}\;F$ of $F$ 
which has minimal norm among all zeros, i.e. the metric projection of
$0$ onto  $\text{zer}\;F.$
\end{theorem}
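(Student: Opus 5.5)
We want to produce a Cauchy sequence $(x_k)$ from the dense sequence $(a_n)$, with an explicit rate, converging to $p:=P_C(0)$, the zero of minimal norm. This is obtained by combining the search idea of Theorem \ref{compact-metric-case} with a modulus of uniqueness for the minimization of $\|\cdot\|$ on the convex set $C=\text{zer}\;F$, which the modulus of convexity $\eta$ supplies.

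\emph{Step 1 (uniform uniqueness).} Let $m:=\|p\|\le D$. From $\eta$ we will derive a primitive recursive $\delta(D,\varepsilon)$ with the following property: if $c,c'\in C$, $\|c\|,\|c'\|\le D$ and both norms are $\le m+2^{-\delta}$, then $\|c-c'\|<2^{-\varepsilon}$. The reason is that $(c+c')/2\in C$, so $\|(c+c')/2\|\ge m$, and uniform convexity (suitably rescaled to radius $D$) forces $\|c-c'\|$ to be small. This is the standard computation behind $(10^*)$ in \cite{Kohlenbach(book)}.

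\emph{Step 2 (approximating the infimum).} Pairs $(n,k)$ with $|F(a_n)|<2^{-\rho(k)}$ carry a guarantee: there is a zero within $2^{-k}$ of $a_n$. Conversely, every zero $c$ is approximated by some $a_n$ with $n\le\alpha(\omega(\cdot))$ and small $|F(a_n)|$. Now fix a precision $j$ and a regularity level $k\ge j$. Among all $n\le \alpha(\omega(\rho(k)+2))$ that pass the decidable test $|(F(a_n))(\rho(k)+2)|<2^{-\rho(k)-1}$, we choose the one minimizing the rational approximation $(\|a_n\|)(k)$ read off from $g(n)$.

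Call this index $n_k$. The candidates come within $2^{-k}$ of genuine zeros, so we get $\|a_{n_k}\|\le m+O(2^{-k})$. Moreover some candidate approximates $p$ itself, and this gives the matching lower-bound direction. Hence $a_{n_k}$ lies within $2^{-k}$ of a zero $c_k$ with $\|c_k\|\le m+3\cdot 2^{-k}$. The bound $D$ guarantees that the relevant zeros have norm $\le D+1$, so Step 1 applies.

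\emph{Step 3 (rate).} Applying Step 1 to $c_k$ and $p$, we get $\|c_k-p\|<2^{-\varepsilon}$ as soon as $3\cdot2^{-k}\le 2^{-\delta(D+1,\varepsilon)}$. We therefore set
\[ \beta(\varepsilon):=n_{\max\{\varepsilon+1,\;\delta(D+1,\varepsilon+1)+2\}}. \]
Then $\|a_{\beta(\varepsilon)}-p\|\le 2^{-\varepsilon}$, and $\Psi$ outputs the point given by the Cauchy name $(a_{\beta(\varepsilon)})_\varepsilon$. All searches are bounded by terms in $\alpha,\omega,\rho,\delta$, so $\Psi$ is primitive recursive. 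Unlike in Theorem \ref{compact-metric-case}, no iterative refinement is needed: each approximation is computed directly.

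\emph{Main obstacle.} The delicate point is Step 1 together with the bookkeeping in Step 2. There are two issues. First, $c_k$ is only an \emph{approximate} minimizer, so we must verify that the min-search over the finite candidate list really bounds $\|c_k\|$ by $m+O(2^{-k})$; this needs the zero near the chosen candidate and the candidate near $p$ to be controlled simultaneously. Second, the modulus $\eta$ must be translated into a uniqueness modulus for norm minimization on convex sets of radius $D$. I would do the latter first, by the usual argument: $\|c-c'\|\ge 2^{-\varepsilon}$ implies $\|(c+c')/2\|\le (m+2^{-\delta})(1-2^{-\eta(\cdot)})<m$ once $\delta$ is chosen large relative to $D$.
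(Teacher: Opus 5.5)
Your plan has the same structure as the paper's proof: a uniqueness modulus for the projection of $0$ onto $C$ obtained from $\eta$ and $D$, a finite candidate set filtered by the decidable test on $F(a_n)$, minimization of the rational approximations of $\|a_n\|$, regularity to get a zero $c_k$ near the winner, and comparison with a candidate near $p$. However, Step 2 fails as written, because your search bound $\alpha(\omega(\rho(k)+2))$ controls only $|F(a_n)|$ and not the distance to $p$. The candidate $a_l$ it guarantees near $p$ only satisfies $\|a_l-p\|<2^{-\omega(\rho(k)+2)}$, and nothing forces $\omega(\rho(k)+2)\ge k$.

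Yet this estimate is what carries the argument. The upper bound $\|a_{n_k}\|\le m+O(2^{-k})$ comes from minimality against $a_l$, not from the candidates being near zeros; that fact only produces $c_k$ and a lower bound you do not need, since $\|c_k\|\ge m$ holds automatically. Here is a concrete failure. Take $K=[1,2]\subset\RR$ and $F\equiv 0$, so $C=K$ and $p=1$. Then $\omega\equiv 0$ and $\rho\equiv 0$ are legitimate moduli, and one may take $a_0=3/2$ with $\alpha(0)=0$. Your candidate set is $\{0\}$ for every $k$, so $\Psi$ outputs $3/2\neq p$.

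The repair is what the paper builds into $L=\alpha(\max\{\Phi(k+1)+2,\omega(K)\})$. Either search up to $\alpha(\max\{k+2,\omega(\rho(k)+2)\})$, or first replace $\omega,\rho$ by $\lambda j.\max\{\omega(j),j\}$ and $\lambda k.\max\{\rho(k),k\}$; enlarging either modulus preserves its defining property. Then $\|a_l-p\|<2^{-k-2}$ and $a_l$ passes the test. Minimality gives $\|a_{n_k}\|<m+\frac{9}{4}2^{-k}$, hence $\|c_k\|<m+\frac{13}{4}2^{-k}$. This is not your $3\cdot 2^{-k}$, but it is harmless since $k\ge\delta+2$, and your Step 3 goes through unchanged. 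The result is essentially the paper's proof with a single level $k$ in place of $\Phi(k+1)+2$.

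A minor point on Step 1: the inequality $(m+2^{-\delta})(1-2^{-\eta(\cdot)})<m$ cannot hold when $m=0$, so your sketch needs a case split. If $m+2^{-\delta}$ is small compared with $2^{-\varepsilon}$, use $\|c-c'\|\le 2(m+2^{-\delta})$ directly. The standard uniqueness modulus you invoke already handles this.
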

{\bf Proof:} 
In $\eta$ and $D$ one can easily compute primitive recursively a 
modulus of uniqueness
$\Phi(k):=\Phi(\eta,D,k)\in\NN$ for
the metric projection of $0$ onto $C$, see e.g. \cite[Proposition 17.4]{Kohlenbach(book)}, where we here write this modulus
with $\varepsilon/\delta$ of the form $2^{-k},$ i.e.
\[ \forall k\in\NN\,\forall y_1,y_2\in C\,\left(
    \bigwedge^{2}_{i=1} (\| y_i\|\le \inf\limits_{y\in C} \| y\| +2^{-\Phi(k)})
    \rightarrow
\| y_1-y_2\| \le 2^{-k}\right). \]
Let $k\in\NN$ be given and define
\[ L:=\alpha\left(\max\left\{ \Phi(k+1)+2,\omega(K)\right\}\right), \]
where
\[ K:=\rho\left( \Phi(k+1)+2\right)+2. \]
We consider the set
\[ S_k:=\left\{ n\le L: |(F(a_n))(K)|\le
   2^{-\rho(\Phi(k+1)+2) -1} \right\},\]
   where here again, $(F(a_n))(K)$ is the $2^{-K}$-rational approximation
   of $F(a_n)$ provided by $\widehat{h(a_n)}(K).$ \\ 
$(i):$ $S_k\not=\emptyset.$ Let $z\in \text{zer}\;F.$ Then by the
definition of $\alpha,$ there exists an $n\le L$ such that
\[ \| a_n-z\| <2^{-\omega(K)}. \] By the definition of $\omega$ and
using that $F(z)=0,$ we get $|F(a_n)|<2^{-K}$ and so
\[ |(F(a_n))(K)|<2^{-\rho\left(\Phi(k+1)+2\right)-1},\]
i.e. $n\in S_k.$ \\
$(ii):$ The following implications holds for all $n\in\NN$
\[ n\in S_k\rightarrow \exists z\in \text{zer}\;F \
\left( \| z-a_n\|<2^{-\Phi(k+1)-2} \right), \]
since by the assumption
\[ |(F(a_n))(K)|\le 2^{-\rho(\Phi(k+1)+2)-1} \]
and so
\[ |F(a_n)|<2^{-\rho(\Phi(k+1)+2)} \]
so that we can apply the definition of $\rho.$ \\
Now compute (primitive recursively in $k$ and the other data mentioned
in the theorem) an $n_k\in S_k$ such that for all $m\in S_k$
\[ (+) \ (\| a_n\|)(\Phi(k+1)+2)\le (\| a_m\|)(\Phi(k+1)+2), \]
where $(\| a_i\|)(\Phi(k+1)+2)$ is the $2^{-\Phi(k+1)-2}$-rational 
approximation to $\| a_i\|$ provided by $\widehat{g(i)}(\Phi(k+1)+2).$ \\ 
By $(ii),$ $\exists z\in \text{zer}\; F$ such that
\[ \| a_{n_k}-z\|<2^{-\Phi(k+1)-2}. \]
Let $z_0$ be the unique zero of $F$ with minimal norm. \\
{\bf Claim:} $\| z\|\le \| z_0\| +2^{-\Phi(k+1)}.$ \\ {\bf Proof of Claim:}
Suppose that
$\| z\|> \| z_0\| +2^{-\Phi(k+1)}.$ Then
$\| a_{n_k}\|>\| z_0\|+2^{-\Phi(k+1)} -2^{-\Phi(k+1)-2}$ 
and so
\[ (1) \ (\| a_{n_k}\|))(\Phi(k+1)+2)>\| z_0\|+2^{-\Phi(k+1)}-2^{-\Phi(k+1)-1}
      = \| z_0\| +2^{-\Phi(k+1)-1}. \]
By the definition of $L$ there exists an $l\le L$ with
\[ \| a_l-z_0\| <2^{-\Phi(k+1)-2}\wedge  \| a_l-z_0\| <2^{-\omega(K)}. \]
By the second conjunct we get - reasoning as in $(i)$ - that $l\in S_k.$ \\
The first conjunct implies that
\[ (2) \ (\| a_l\|)(\Phi(k+1)+2)< \| z_0\| +2^{-\Phi(k+1)-1}.\]
$(1)$ and $(2)$ together yield that
\[ (\| a_{n_k}\|)(\Phi(k+1)+2) >(\| a_{l}\|)(\Phi(k+1)+2) \]
which contradicts $(+)$ and so finishes the proof of the claim.
\\ Since $\Phi$ is a modulus of uniqueness, the claim implies that
$\| z-z_0\|\le 2^{-k-1}$ and so (since $\Phi(k)\ge k$ which holds by
the construction of $\Phi$ and which we, anyhow, may assume w.l.o.g. by 
taking $\max\{ \Phi(k),k\}$, otherwise)
$\| a_{n_k} -z_0\|< 2^{-k}.$  This implies that $(n_k)_k$ is a name for
$z_0$ in the sense of \cite{Kohlenbach(book)}.\hfill $\Box$\\[2mm]
The following axiom (formulated in the language of the system
${\cal A}^{\omega}[X,d,W]$ of classical analysis in all finite types
augmented with an abstract type $X$ for a bounded hyperbolic space defined in
\cite{Kohlenbach(05)}) states that for an abstract bounded metric space $(X,d)$
any nonexpansive
function
$F:X\to \RR$ which has a zero possesses a modulus of regularity 
\[ \mbox{\rm (NE-Reg)}: \forall F:X\to\RR,\,r>0 \, (F \ \mbox{n.e.}\,\wedge\exists z\in \text{zer}\; F\rightarrow \exists \rho:\NN\to\NN (\rho \ \mbox{\rm mreg}\ \text{zer}\;F,r)). \]
We will show that this classically false (in the specific sense
of Proposition \ref{counterexample} below) axiom can - differently from the
also classically false nonstandard principle $\exists$-UB$^X$ from \cite{Kohlenbach(06)}
(see also \cite{Kohlenbach(book)}) which is admissible
in logical metatheorems of proof mining - not be added to ${\cal A}^{\omega}[X,d,W]$ to obtain a formal system which admits the extraction of classically
correct uniform bounds for (essentially) $\forall\exists$-theorems.\footnote{So, in particular, (NE-Reg) cannot be derived from a combination of
  $\exists$-UB$^X$ with comprehension over natural numbers, e.g. by an
  adaptation to the abstract type $X$ 
  of how the generalized uniform boundedness principles
  $\Pi^0_k$-UB$^{-}\res$ studied in \cite{Kohlenbach(98)} are formed.}
The argument is already implicit in the proof of \cite{Kohlenbach(CHBC)}[Theorem 3] but
we include it here for completeness.
\begin{proposition}\label{counterexample}
  In ${\cal A}^{\omega}[X,d,W]+\mbox{\rm (NE-Reg)}$ one can
  prove
a sentence of the form
\[ A:\equiv \forall g\in\NN^{\NN}, k\in\NN, \, x^X,p^X,T^{X\to X}\,(T \ \mbox{nonexpansive} \ \to \exists n\in\NN\,A_{\exists}(g,k,x,p,T)), \] where $A_{\exists}(g,k,x,p,T)$ is
a provably extensional $\Sigma^0_1$-formula, 
such that $A$ is not true in all bounded hyperbolic spaces $(X,d,W)$
(in fact not even in all closed bounded convex subsets of $l_2$).
\end{proposition}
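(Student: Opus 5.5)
The plan is to find a concrete $\forall\exists$-statement that (NE-Reg) makes provable because it forces \emph{strong} convergence of a Fej\'er monotone iteration, and which fails in $l_2$ because strong convergence there can genuinely fail. The statement will be metastability (in Tao's sense) of the Krasnoselskii iteration, with the parameter $g$ as the counterfunction. Fix $x,p\in X$, a nonexpansive $T:X\to X$, and let $x_0:=x$, $x_{n+1}:=\frac12 x_n\oplus\frac12 Tx_n$, formed with $W$. As $A_\exists$ take
\[ d(p,Tp)>2^{-n} \ \vee\ \exists i,j\in[n,n+g(n)]\,\left( d(x_i,x_j)<2^{-k}\right). \]
This is $\Sigma^0_1$, since $<$ on reals is $\Sigma^0_1$ and the quantifiers over $i,j$ are bounded. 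It is provably extensional, because $T$ is nonexpansive and $W$ is extensional in ${\cal A}^{\omega}[X,d,W]$.

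\textbf{Provability.} Work in ${\cal A}^{\omega}[X,d,W]+\mbox{(NE-Reg)}$ and argue by cases on whether $d(p,Tp)=0$. If $d(p,Tp)>0$, some $n$ satisfies the first disjunct. Otherwise $p$ is a fixed point. The map $F(y):=\frac12 d(y,Ty)$ is nonexpansive, since $y\mapsto d(y,Ty)$ is $2$-Lipschitz, and $p\in\text{zer}\;F=\text{Fix}\,T$. Choosing $r$ at least the (bounded) diameter of $X$, (NE-Reg) yields a modulus $\rho$ of regularity w.r.t. $\text{zer}\;F$.

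Inside the system we then establish three facts:
\begin{itemize}
\item $(x_n)$ is Fej\'er monotone w.r.t. $\text{Fix}\,T$. This is the standard hyperbolic-space calculation.
\item $d(x_n,Tx_n)\to 0$, i.e.\ asymptotic regularity in bounded hyperbolic spaces (Ishikawa, Goebel--Kirk). It is provable in ${\cal A}^{\omega}[X,d,W]$ and even has a uniform rate depending only on a bound on $d$.
\item Combining these with $\rho$ through the result of \cite{KohlenbachLopezNicolae(2019)} gives a rate of convergence. In particular $(x_n)$ is Cauchy with a rate $\varphi$.
\end{itemize}
For the given $g,k$ take $n:=\varphi(k+1)$; then $d(x_i,x_j)<2^{-k}$ for all $i,j\ge n$, so the second disjunct holds. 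Hence $A$ is provable.

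\textbf{Falsity.} Use the Genel--Lindenstrauss example: a closed bounded convex $C\subset l_2$, a nonexpansive $T:C\to C$ and a starting point $x$ for which the Krasnoselskii iterates (with $\lambda=\frac12$) do not converge in norm. By Browder's fixed point theorem in Hilbert space, $T$ has a fixed point $p\in C$, so the first disjunct is false for every $n$. Since $(x_n)$ is not Cauchy, there is $k$ such that for every $n$ there are $i,j\ge n$ with $d(x_i,x_j)\ge 2^{-k}$; let $\psi(n)$ be the corresponding $\max\{i,j\}$. With $g(n):=\psi(n)-n$, every interval $[n,n+g(n)]$ contains $i,j$ with $d(x_i,x_j)\ge 2^{-k}$, so $A_\exists(g,k,x,p,T)$ fails for every $n$.

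\textbf{Main obstacle.} The delicate step is the middle one: verifying that the transfer ``modulus of regularity + Fej\'er monotonicity + asymptotic regularity $\Rightarrow$ Cauchy rate'' from \cite{KohlenbachLopezNicolae(2019)} is formalizable in ${\cal A}^{\omega}[X,d,W]$ for the abstract, not necessarily complete, space $X$. The argument only uses $d(x_n,\text{Fix}\,T)$ via approximate distances to fixed points, never the limit itself, so completeness should not be needed and Cauchyness suffices. I would check this first. A secondary point is that the Genel--Lindenstrauss set, being a closed bounded convex subset of $l_2$, is indeed a bounded hyperbolic space in the sense of \cite{Kohlenbach(05)}, which is immediate with $W(x,y,\lambda)=(1-\lambda)x+\lambda y$.
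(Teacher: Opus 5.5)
Your route is the paper's. You use the Krasnoselskii iteration with $\lambda=\frac12$, apply (NE-Reg) to the displacement function, and combine Fej\'er monotonicity, asymptotic regularity and \cite[Theorem 4.1]{KohlenbachLopezNicolae(2019)} to get Cauchyness inside ${\cal A}^{\omega}[X,d,W]+\mbox{(NE-Reg)}$. The Genel--Lindenstrauss example then gives the failure in $l_2$. Your normalization $F(y):=\frac12 d(y,Ty)$ is a small improvement: the paper's $F(x):=d(x,Tx)$ is only $2$-Lipschitz, whereas (NE-Reg) is stated for nonexpansive $F$. Encoding the hypothesis $p=Tp$ as the disjunct $d(p,Tp)>2^{-n}$ matches the paper's remark that $=_X$ is $\Pi^0_1$.

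There is, however, a quantifier error in your $A_\exists$ that breaks the falsity half as written. With $\exists i,j\in[n,n+g(n)]\,(d(x_i,x_j)<2^{-k})$ you can always take $i=j=n$. So $A_\exists$ holds for every $n$, and $A$ is true in every space, the Genel--Lindenstrauss one included. Your refutation shows that each interval $[n,n+g(n)]$ contains $i,j$ with $d(x_i,x_j)\ge 2^{-k}$. That refutes only the version with $\forall i,j\in[n,n+g(n)]$, i.e.\ Tao's metastability, which is what you must write; your provability argument also establishes that version.

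Once the quantifier is universal, the claim that $A_\exists$ is $\Sigma^0_1$ is no longer purely syntactic. The formula $\forall i,j\le N\,\exists m\,B$ with $B$ quantifier-free must be rewritten as $\exists M\,\forall i,j\le N\,\exists m\le M\,B$. This is provable in ${\cal A}^{\omega}[X,d,W]$ (bounded collection by induction on $N$, or via QF-AC), but you should say so.

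The paper sidesteps this by using $d(x_n,x_{g(n)})<2^{-k}$, which is directly $\Sigma^0_1$ and follows from Cauchyness. Its universal closure over $g,k$ fails for a non-Cauchy $(x_n)$: if $d(x_i,x_j)\ge 2^{-k}$ for some $i,j\ge n$, then one of them, call it $g(n)$, satisfies $d(x_n,x_{g(n)})\ge 2^{-k-1}$.
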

{\bf Proof:}
Consider the sentence (abbreviating `nonexpansive' by `n.e.')
\[ A:\equiv\forall  g\in\NN^{\NN}, k\in\NN, \,x^X,p^X, T^{X\to X}\,\left( T \ \mbox{n.e.} \,\wedge\, p=_XTp\to \exists n\in\NN\ d_X(x_n,x_{g(n)})<_{\RR} 2^{-k}\right), \]
where $(x_n)$ is defined as $x_0:=x, \ x_{n+1}:=W_X(x_n,Tx_n,1/2),$
which is equivalent to a sentence of the required logical form noticing that
$=_X\in\Pi^0_1$ and $<_{\RR}\in\Sigma^0_1$.
\\ ${\cal A}^{\omega}[X,d,W]$ proves that the Krasnoselski iteration of $T$
is asymptotically regular, i.e. that $d(x_n,Tx_n)\to 0$ (see 
\cite{Kohlenbach/Leustean,Kohlenbach(book)}).
Since $(x_n)$ is Fej\'er monotone w.r.t.
$\text{zer}\;F$ where $F(x):=d(x,T(x)),$ the assumption $F(p)=_{\RR} 0$
by (NE-Reg) yields a
modulus of regularity w.r.t. $\text{zer}\;F$ and $\overline{B}(p,b)$ where $b^{\NN}$
is the constant witnessing the ($b$-)boundedness of $X$ in
${\cal A}^{\omega}[X,d,W].$ Hence by \cite[Theorem 4.1]{KohlenbachLopezNicolae(2019)}, $(x_n)$ is - reasoning in  ${\cal A}^{\omega}[X,d,W]+\mbox{(NE-Reg)}$ - 
a Cauchy sequence and so for all $k\in\NN,\,g:\NN\to\NN$ 
\[  \exists n\in\NN\ d_X(x_n,x_{g(n)})<2^{-k}. \]
Thus in total we get the provability of $A.$ However,
by \cite{GenelLindenstrauss}, 
there exists a bounded closed and convex subset $C\subset l_2$ and a
nonexpansive selfmapping $T:C\to C$ possessing a fixed point in $C$ and
a point $x_0 \in C$ 
such that $(x_n)$ does not strongly converge. Hence $(x_n)$ is not a
Cauchy sequence and so $\forall k\in\NN\,\forall g\in \NN^{\NN}\,
\exists n\in\NN\ (d_X(x_n,x_{g(n)})<2^{-k})$ does
not hold in this example, where $X:=C$ and $W(x,y,\lambda):=(1-\lambda)x
+\lambda y.$ 
\hfill $\Box$ 
\\[2mm]
In the following we use the formal definition of the binary (`weak')
K\H{o}nig's lemma (in a language with function variables)
as given in \cite{Troelstra(74)} (see also 
\cite{Troelstra(77)}; here $*,\overline{b}x,
lth(n)$ refer to the primitive recursive coding of finite sequences from
\cite{Troelstra(73)}):

\begin{definition}[\cite{Troelstra(74)}] \hspace{8.5cm}  \\
$Tf:\equiv\aquant n^0 ,m^0 (f(n*m)=_0 0\rightarrow fn=_0 0)\wedge
\aquant n^0 ,x^0 (f(n*\langle x\rangle )=_0 0\rightarrow x\le_0 1)$ \\ (i.e.
$T(f)$ asserts that $f$ represents (the characteristic function of)
a binary tree) \\[1mm] 
$T^{\infty}(f):\equiv T(f)\wedge\aquant x^0 \equant n^0 (lth(n)
=x\wedge fn= 0),$ \\ (i.e. $T^{\infty}(f)$ expresses that $f$ represents an
infinite binary tree), \\[1mm] 
{\rm WKL} $:\equiv\aquant f^1 \big(
T^{\infty}(f)\rightarrow\equant b^1\aquant x^0
(f(\overline{b}x)=0)\big)$.
\end{definition}
\begin{definition}\label{regularity-tree}
We say that $\rho:\NN\to\NN$ is a modulus of regularity for an infinite
binary tree (represented by) $f$ w.r.t. infinite paths through $f$ if
\[ \forall h\le_1 1\,\forall k^0 \, \left( f(\overline{h}(\rho(k)))=0\rightarrow
\exists b\le_11\,( \forall x^0 (f(\overline{b}x)=0) \wedge \overline{h}k=
\overline{b}k)\right). \]
\end{definition}
Definition \ref{regularity-tree} can be seen as a special case of
Definition \ref{regularity-zero} for $X:=2^{\NN}$ with the Baire space
metric
\[ d(f,g):=\left\{ \ba{l} 2^{-\mbox{min} \,k [f(k)\not= g(k)]-1}, \
      \mbox{if $f\not= g$} \\
      0, \ \mbox{otherwise}\ea \right. \]
and \[ F:2^{\NN}\to \RR, \ h\mapsto \sum\limits^\infty_{i=0}
\chi(h,i)\cdot 2^{-i},\]
where
\[\chi(h,i)=\left\{\ba{l} 0, \ \mbox{if} \ f(\overline{h}i)=0, \\
1, \ \mbox{otherwise}. \ea \right. \]
Here we use that $d(f,g)<2^{-k}$ iff $\overline{f}(k)=\overline{g}(k)$ and
that
\[ |F(h)|<2^{-n}\to f(\overline{h}n)=0\to |F(h)|\le 2^{-n}.\]
More precisely, if $\rho$ satisfies Definition \ref{regularity-tree}, then
it also is a modulus of regularity in the sense of \ref{regularity-zero}
for zer\,$F$ and -
conversely - if $\rho$ is modulus in the sense of
Definition \ref{regularity-zero}, then
$\tilde{\rho}(k):=\rho(k)+1$ satisfies Definition \ref{regularity-tree}.
\\[1mm]
As shown in \cite{Kohlenbach(moduli)}, the existence of a modulus of regularity
for continuous functions on compact spaces is equivalent to arithmetical
comprehension ACA$_0$ (whereas the existence of the
$\forall \varepsilon\exists \delta$-version is equivalent to WKL). The reason
for this difference in strength is that already the $\forall\varepsilon\exists\delta$-regularity implies $\Sigma^0_1$-LEM (see Theorem
\ref{regularity-LEM}.2) below) which then, when strengthened into
a modulus, becomes $\Sigma^0_1$-comprehension and so - by interation -
arithmetical comprehension. \\ Using
arithmetical comprehension, one can construct the leftmost infinite path
in an infinite binary tree. 
The next result gives an explicit transformation
of a modulus of regularity in the sense of Definition \ref{regularity-tree}
into the leftmost branch:
\begin{proposition}\label{leftmostbranch}
There exists a Kalmar elementary functional $\varphi$ of type 2 (more
precisely $\varphi$ is given by a closed term of G$_3$A$^{\omega}$ as
defined in {\rm \cite{Kohlenbach(95A)}}) such that for any given infinite
binary tree $f$ with modulus of regularity $\rho,$ $\varphi(f,\rho)$
is the leftmost infinite path in $f.$
\end{proposition}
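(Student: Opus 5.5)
The plan is to compute the leftmost path $b$ bit by bit, using the modulus of regularity $\rho$ to bound the depth to which we must look in order to decide each bit. The key observation comes from Definition \ref{regularity-tree}: if some node $s$ of length $\rho(k)$ lies in $f$ (i.e.\ $f(s)=0$ and $s$ is binary), then its initial segment of length $k$ extends to an infinite path. Conversely, if $\overline{s}k$ extends to an infinite path, then some extension of $\overline{s}k$ of length $\rho(k)$ (indeed of any length) lies in $f$. Hence, for any binary sequence $u$ of length $k$ with $\rho(k)\ge k$,
\[ \exists b\le_1 1\,(\overline{b}k=u\wedge \forall x\,(f(\overline{b}x)=0)) \;\leftrightarrow\; \exists s\,(lth(s)=\max\{\rho(k),k\}\wedge s \mbox{ binary}\wedge \overline{s}k=u\wedge f(s)=0). \]
The right-hand side is a bounded search over at most $2^{\max\{\rho(k),k\}}$ sequences, and is therefore elementary in $f,\rho$. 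W.l.o.g.\ we replace $\rho$ by $\max\{\rho(k),k\}$, which is still a modulus because the condition only becomes stronger.

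Define $b$ by course-of-values recursion: $b(k):=0$ if the sequence $\overline{b}k*\langle 0\rangle$ (of length $k+1$) satisfies the bounded criterion above with $k+1$ in place of $k$, and $b(k):=1$ otherwise. By induction on $k$, with hypothesis ``$\overline{b}k$ is extendable and agrees with the leftmost path'' (the base case uses $T^\infty(f)$ together with the existence of some path), we get the following. If $\overline{b}k*\langle0\rangle$ is extendable, the leftmost path goes through it and we choose $0$. Otherwise the leftmost path goes through $\overline{b}k*\langle1\rangle$, which is then extendable. So $b$ is the leftmost infinite path. Note that the existence of some infinite path, needed to start the induction, follows from $\rho$ itself applied to a node of length $\rho(0)$ in $f$, which exists by $T^\infty(f)$.

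The main obstacle is the complexity claim: showing that $\varphi$ is Kalmar elementary, i.e.\ definable by a closed term of G$_3$A$^\omega$, rather than merely primitive recursive. The course-of-values recursion has to be avoided. I would first try to define $b(k)$ directly, without recursion. Let $s_k$ be the lexicographically least binary $s$ of length $\rho(k+1)$ with $f(s)=0$. Then set $b(k):=s_k(k)$.

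To justify this, consider the lexicographically least node of length $\rho(k+1)$ in $f$. Its prefix of length $k+1$ is extendable by the regularity property. Any lexicographically smaller extendable prefix $u$ of length $k+1$ would have an extension of length $\rho(k+1)$ in $f$, and that extension would be lexicographically smaller still, which is a contradiction. Hence the prefix of length $k+1$ of $s_k$ equals $\overline{b}(k+1)$ for the leftmost path $b$. Minimization bounded by $2^{\rho(k+1)}$, with the tree $f$ as oracle, is available in G$_3$A$^\omega$ (bounded search plus exponentiation), which gives the required elementary functional $\varphi(f,\rho)(k)$.
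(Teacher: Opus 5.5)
Your final construction, $\varphi(f,\rho)(k):=s_k(k)$ with $s_k$ the lexicographically least node of length $\rho(k+1)$ in $f$ (after assuming w.l.o.g.\ $\rho(k)\ge k$), is exactly the paper's, and so is your justification: regularity makes $\overline{s_k}(k+1)$ extendable, and a lexicographically smaller extendable node of length $k+1$ would give a node of length $\rho(k+1)$ to the left of $s_k$, which is the paper's Case~1, while its Case~2 (no infinite path lies left of the leftmost one) is the step you leave implicit when identifying $\overline{s_k}(k+1)$ with $\overline{b}(k+1)$. The preliminary course-of-values construction is correct but unnecessary.
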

{\bf Proof:} Define $\varphi(f,\rho)(k):=k$-th component $(\sigma)_k$ of
the leftmost finite branch $\sigma$ of length $\rho(k+1)$ in $f$ which
can be searched for by exponentially bounded search.
We may assume that $k\le\rho(k)$ for all
$k.$ Let $k$ be fixed and consider the leftmost finite branch
$\sigma$ of length $\rho(k+1)$ in $f.$ Then the infinite
sequence $\sigma*0^1$ (defined as continuing $\sigma$ by $0$'s) satisfies
$f(\overline{(\sigma*0)}(\rho(k+1))=f(\sigma)=0.$ By the definition of $\rho$ we
get the existence of an infinite path $b\le 1$ with
\[ \overline{b}(k+1)=\overline{(\sigma*0)}(k+1). \]
Let $\tilde{b}\le 1$ be the leftmost infinite path in $f.$
\\ We show that $\forall i\le k\,(\tilde{b}(i)=b(i))$ (and so,
in particular, $\tilde{b}(k)=b(k)=(\sigma)_k$): suppose that \\ 
$\exists i\le k \, \left( \tilde{b}(i)\not=b(i)\right)$ and let $i_0$
be the smallest such $i.$ \\
Case 1: $\tilde{b}(i_0)<b(i_0)=(\sigma)_{i_0}.$ Then $\overline{\tilde{b}}(\rho(k+1))$
would be a finite branch in $f$ of length $\rho(k+1)$ which is more to
the left than $\sigma$ (since $\tilde{b}(i)=b(i)=(\sigma)_i$ for all
$i<i_0$) which contradicts the
definition of $\sigma.$ \\
Case 2: $b(i_0)<\tilde{b}(i_0).$ Then $b$ would be a more to the left
infinite path in $f$ than $\tilde{b}$ contradicting the definition of
$\tilde{b}.$ \hfill $\Box$
\\[1mm]
As shown in \cite[Theorem 4.1]{KohlenbachLopezNicolae(2019)}, a modulus of
regularity always yields (given an approximate solution bound) a
rate of convergence for Fej\'er monotone
algorithms computing approximate solutions. The algorithm in the proof
of Proposition \ref{leftmostbranch}, in fact, is an instance of this:
\begin{proposition}
  Let for a given infinite binary tree $f,$ $(x_k)$ be defined
  as follows: $x_k:=\sigma_k*0^1,$ where $\sigma_k$ is the leftmost finite branch of length
  $k$ in $f.$ Then $(x_k)$ is Fej\'er monotone w.r.t.
  the set $S$ of infinite paths through $f.$
\end{proposition}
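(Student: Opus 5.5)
The plan is to verify Fej\'er monotonicity directly in the Baire-space metric $d$ introduced after Definition \ref{regularity-tree}. Fix an infinite path $p\in S$ and $k\in\NN$; the goal is $d(x_{k+1},p)\le d(x_k,p)$. Since $d(u,v)=2^{-j-1}$ with $j$ the first index where $u$ and $v$ differ, it suffices to show the following: if $x_k$ agrees with $p$ on all indices $<j$, then so does $x_{k+1}$.

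Two order facts about leftmost branches do all the work, with $\le_{lex}$ the lexicographic order on sequences of a fixed length.
\begin{itemize}
\item[(a)] $\sigma_k\le_{lex}\overline{p}k$, because $\overline{p}k$ is a branch of length $k$ in $f$ and $\sigma_k$ is the leftmost one.
\item[(b)] $\sigma_k\le_{lex}\overline{\sigma_{k+1}}k$, because the restriction of $\sigma_{k+1}$ to length $k$ is again a branch of length $k$ in $f$, since $f$ is a tree.
\end{itemize}
Applying (a) at level $k+1$ and restricting gives $\overline{\sigma_{k+1}}k\le_{lex}\overline{p}k$. Together with (b) this yields
\[
\sigma_k\le_{lex}\overline{\sigma_{k+1}}k\le_{lex}\overline{p}k .
\]
In general $\sigma_k$ need not be a prefix of $\sigma_{k+1}$, since $\sigma_k$ may be a dead end. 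This sandwich is exactly what replaces prefix-monotonicity, and I expect isolating it to be the only real point of the argument.

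Let $j$ be the first index with $x_k(j)\neq p(j)$; if there is none, then $x_k=p$, and the sandwich forces $\overline{\sigma_{k+1}}k=\overline{p}k$, which is handled like Case 2.

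\emph{Case 1:} $j<k$. Then $\sigma_k$ and $\overline{p}k$ agree below $j$ and differ at $j$. Any sequence lying lexicographically between them must agree with both below $j$. Hence $\overline{\sigma_{k+1}}k$, and therefore $x_{k+1}$, agrees with $p$ on all indices $<j$, giving $d(x_{k+1},p)\le 2^{-j-1}=d(x_k,p)$.

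\emph{Case 2:} $j\ge k$. Then $\sigma_k=\overline{p}k$, and the sandwich forces $\overline{\sigma_{k+1}}k=\overline{p}k$.
\begin{itemize}
\item If $\sigma_{k+1}(k)=p(k)$, then $x_{k+1}=\overline{p}(k+1)*0^1$. This agrees with $p$ on a superset of the initial segment on which $x_k=\overline{p}k*0^1$ does, so its distance to $p$ is no larger.
\item Otherwise $\sigma_{k+1}(k)=0\neq p(k)=1$, because $\sigma_{k+1}\le_{lex}\overline{p}(k+1)$. Then $x_{k+1}=\overline{p}k*0^1=x_k$.
\end{itemize}

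In all cases $d(x_{k+1},p)\le d(x_k,p)$. Since $p\in S$ was arbitrary, $(x_k)$ is Fej\'er monotone w.r.t.\ $S$. Beyond the sandwich observation, the remaining steps are routine bookkeeping with the first index of difference.
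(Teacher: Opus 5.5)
Your proof is correct and is essentially the paper's argument. The paper likewise reduces Fej\'er monotonicity to showing that agreement of $x_k$ with $b$ on an initial segment carries over to $x_{k+1}$. Its key case uses the same sandwich, in the form $\overline{x_k}(k+1)\le_{lex}\sigma_{k+1}\le_{lex}\overline{b}(k+1)$ (using $x_k(k)=0$), and the only difference is bookkeeping: the paper argues by contradiction from the first index where $x_{k+1}$ differs from $b$, while you split directly on the first index where $x_k$ differs from $p$.
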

{\bf Proof:}
Let $b\in S.$ We have to show that
\[ \forall k \in\NN \, (d(x_{k+1},b)\le d(x_k,b)). \] This in turn follows from
\\[1mm]
{\bf Claim:} \ $\forall k,m\in\NN\, (\overline{x}_km=\overline{b} m
  \to \overline{x}_{k+1} m=\overline{b} m).$\\[1mm]
{\bf Proof of Claim:} let $k,m\in\NN$ be fixed and assume that
$\overline{x}_km=\overline{b} m$. Suppose that for some $i<m$ we would
have that $x_{k+1}(i)\not= b(i)$ and let $i_0$ be the least such $i$.
Note that $\forall j<i_0\, (x_k(j)=b(j)=x_{k+1}(j)).$\\ 
Case 1: $i_0\le k.$ Since $\sigma_{k+1}$ is the leftmost branch of length
$k+1$ while $\sigma_k$ is leftmost of length $k,$ it follows (using that
$x_k(k)=0$) that
$x_{k}$ is to the left of $x_{k+1}$
so that $b(i_0)=x_k(i_0)\le x_{k+1}(i_0).$
Since $\overline{b}(k+1)$ is some finite branch of length $k+1,$
$\sigma_{k+1}$ is left of it and so $x_{k+1}(i_0)\le b(i_0).$ Hence in total
$x_{k+1}(i_0)=b(i_0)$ which is a contradiction to the definition of $i_0.$\\
Case 2: $i_0>k.$ Then by definition $x_{k+1}(i_0)=0=x_k(i_0)=b(i_0)$ which
again contradicts the definition of $i_0.$
\hfill $\Box$
\begin{definition}
\begin{enumerate}
\item
The $\Sigma^0_1$-law-of-excluded-middle principle (with function parameters) 
is defined as\footnote{For a proof-theoretic study of this principle
  (as a first-order principle without function variables) and its computational 
interpretation see
  \cite{Akama} and \cite{AschieriBerardi} respectively.}
\[\Sigma^0_1\mbox{\rm -LEM}: \ \forall f\, \left( \forall n^0\,(f(n)=0)\vee 
\exists n^0\,(f(n)\not=0)\right). \]
\item
The principle of binary choice for $\Pi^0_1$-formulas is defined as 
\[ \hspace*{-1cm}\Pi^0_1\mbox{\rm -AC}_{\le 1}: \ \forall f\,\left( \forall n^0\,\exists m\le_0 1\,\forall k^0\, (f(n,m,k)=0)
\rightarrow \exists g\le_1 \lambda x.1\,\forall n,k\, (f(n,g(n),k)=0)\right). \]
\end{enumerate}
\end{definition}
In the following, EL denotes the system of elementary intuitionistic analysis
from \cite{Troelstra(73)}.

\begin{theorem}\label{regularity-LEM}
\begin{enumerate}
\item
  {\rm EL$+\Sigma^0_1$-LEM$+\Pi^0_1$-AC$_{\le 1}$} proves that for every compact
  metric space $X=\widehat{A}$ any continuous mapping $F:X\to\RR$ having
  a zero is regular w.r.t. $\text{zer}\;F.$
\item
  Already for Lipschitz continuous functions $F:[0,1]\to\RR$ with
  $zer\,F\not=0$, the regularity of $F$ w.r.t. $zer\, F$ implies
  {\rm $\Sigma^0_1$-LEM over EL}.
\end{enumerate}
\end{theorem}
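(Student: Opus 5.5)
For part 1 I would redo the classical compactness argument in a form where the only non-constructive steps are finitely many instances of $\Sigma^0_1$-LEM, plus the consequences of $\Pi^0_1$-AC$_{\le 1}$. Over EL$+\Sigma^0_1$-LEM, $\Pi^0_1$-AC$_{\le 1}$ yields WKL, and WKL in turn gives the usual compactness facts for $X=\widehat{A}$ (given by a dense sequence $(a_i)$ and modulus of total boundedness $\alpha$, as before Theorem \ref{compact-metric-case}). The facts I need are: a continuous $F$ has a modulus of uniform continuity $\omega$, and a sequence in $X$ has a convergent subsequence. I would justify these by coding $X$ as a finitely branching tree via $\alpha$, which is the step I will have to check most carefully.

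Fix $n$. Consider the finitely many closed balls $B_i:=\overline{B}(a_i,2^{-n-1})$ for $i\le\alpha(n+1)$; they cover $X$. For each $i$, the statement ``$|F|$ is bounded below by $2^{-k}$ on $B_i$ for some $k$'' can be expressed, up to the usual $2^{-k}$ slack, as a $\Sigma^0_1$ statement $\exists k\,P_i(k)$. Here $P_i(k)$ is decidable: it checks, on the finitely many net points $a_j$ with $j\le\alpha(\omega(k+2))$ lying (approximately) in $B_i$, that the rational approximations of $|F(a_j)|$ exceed $2^{-k+1}$. $\Sigma^0_1$-LEM gives $\exists k\,P_i(k)\vee\forall k\,\neg P_i(k)$ for each $i$, and by finite induction we may treat all $i\le\alpha(n+1)$ simultaneously.

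\emph{Case $\forall k\,\neg P_i(k)$.} Then for each $k$ there is a net point $y_k$ near $B_i$ with $|F(y_k)|<2^{-k+2}$. I would build a cluster point of $(y_k)$ by the WKL-based compactness argument; it is a zero of $F$ within $2^{-n}$ of $a_i$, say within $2^{-n-1}+2^{-n-2}$.

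\emph{Case $\exists k\,P_i(k)$.} We pick such a witness $k_i$.

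Set $k:=\max_i k_i$ plus a small margin. If $|F(x)|<2^{-k}$, then $x$ lies in some $B_i$ (decidably, up to slack). That ball cannot be of the positive kind, so it contains a zero $z$ with $d(x,z)<2^{-n}$; adjusting radii by constant factors makes the bookkeeping work. The main obstacle is making the $\Sigma^0_1$ coding of ``no zero in $B_i$'' precise, and ensuring that the second case genuinely produces a zero using only WKL/$\Pi^0_1$-AC$_{\le 1}$ and not more.

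For part 2, given $f$, I define the real
\[ c:=\lim_m c_m, \qquad c_m:=2^{-n_0}\ \text{if } n_0\le m \text{ is least with } f(n_0)\neq 0, \qquad c_m:=0 \text{ otherwise}. \]
This is a Cauchy sequence with rate $2^{-m}$, so $c$ is a real in EL. Let
\[ F(x):=\min\{x,\;|1-x|+c\}\quad\text{on }[0,1]. \]
This $F$ is Lipschitz with constant $1$, and $F(0)=0$, so $\text{zer}\,F\neq\emptyset$. Moreover, every zero of $F$ in $[1/2,1]$ forces $c=0$.

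Apply regularity with $n:=2$ to obtain $k$. Compare $c$ with $2^{-k}$ constructively, deciding $c<2^{-k}$ or $c>2^{-k-1}$ from the rational approximation $c_{k+2}$.

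\emph{Case $c>2^{-k-1}$.} Then $c_{k+2}\neq 0$, and reading off $n_0$ gives $\exists n\,(f(n)\neq 0)$.

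\emph{Case $c<2^{-k}$.} Then $|F(1)|=c<2^{-k}$, so regularity yields a zero $z'$ with $|1-z'|<1/4$. Since $z'>0$, we get $|1-z'|+c=0$, hence $c=0$. If some $f(n)\neq0$ then $c>0$, so $\forall n\,(f(n)=0)$.

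This gives the disjunction of $\Sigma^0_1$-LEM. This part is routine; the only care needed is that all case distinctions are on rational approximations.
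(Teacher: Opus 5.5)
\textbf{Part 1 has a genuine gap.} It is in the step you flag yourself. Among the compactness facts you take from WKL is that every sequence in $X$ has a convergent subsequence. In the case $\forall k\,\neg P_i(k)$ you then obtain the zero as a cluster point of $(y_k)$. Neither is available in the system.

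Already for $X=[0,1]$, sequential compactness (Bolzano--Weierstrass) is equivalent to ACA$_0$ over RCA$_0$. A cluster point of $(y_k)$ is no weaker, because from it one computes a convergent subsequence by a $\Sigma^0_1$-search. On the other hand, every $\omega$-model of WKL$_0$, read classically, satisfies EL$+\Sigma^0_1$-LEM$+\Pi^0_1$-AC$_{\le 1}$: $\Sigma^0_1$-LEM is a classical tautology and $\Pi^0_1$-AC$_{\le 1}$ follows from WKL. There are such models that are not closed under the Turing jump. So this step cannot be carried out in the system. Relying on it puts the argument at the level of ACA$_0$, which is the strength the paper attributes to the \emph{modulus} version, not to the $\varepsilon/\delta$-form.

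\textbf{The repair.} Ask WKL only for \emph{some} zero in the enlarged ball, not for a limit point of your particular sequence. Take the decidable tree, with branching bounded via $\alpha$, of finite sequences $j_0,\ldots,j_{m-1}$ of net indices such that:
\begin{itemize}
\item the $a_{j_l}$ form an approximate $2^{-l}$-Cauchy chain;
\item they lie approximately in the enlarged ball;
\item the rational approximations of $|F(a_{j_l})|$ are below $2^{-l}$, with $\omega$ built into the precisions.
\end{itemize}
Since $P_i$ is decidable, each $\neg P_i(k)$ yields its witness $a_j$ by bounded search. Approximating $a_j$ by net points gives a node of every length, so the tree is infinite in the sense of $T^{\infty}$. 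An infinite branch supplied by WKL then converges with rate to a zero in the closed enlarged ball.

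With this change, the rest of your plan matches the route the paper indicates by referring to the proof of \cite[Theorem 4.2(1)]{Kohlenbach(moduli)}. That plan is: one $\Sigma^0_1$-dichotomy per ball; the finitely many instances combined by induction (in effect $\Sigma^0_1$-BCA); $k$ the maximum of the witnesses; and WKL obtained from LLPO$+\Pi^0_1$-AC$_{\le 1}$, with LLPO coming from $\Sigma^0_1$-LEM.

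\textbf{Part 2 is correct and takes a different route from the paper.} The key points check out. $F$ is $1$-Lipschitz with $F(0)=0$ and $F(1)=c$. If $f(n)=0$ for all $n\le k+2$, then $c\le 2^{-k-3}<2^{-k}$. Regularity at $x=1$ then gives a zero $z'>3/4$, so $\neg(1-z'+c>0)$. Hence $c=0$, and $\forall n\,(f(n)=0)$ follows by decidability.

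The paper instead refines its classical reversal to ACA$_0$. It takes $F(x)=|x-Tx|$ with the nonexpansive $T$ built from a nondecreasing rational sequence $(a_n)$. Using Fej\'er monotonicity and asymptotic regularity of $T^n0$, it shows that $\varepsilon/\delta$-regularity yields the Cauchy property of $(a_n)$, i.e.\ PCM$_{ar}$. It then invokes the known implication PCM$_{ar}\to\Sigma^0_1$-LEM over EL.

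Your encoding is more elementary: it uses a single instance of regularity ($n=2$, at the point $1$), no induction, and no external results. The paper's route stays within the Fej\'er-monotone setting $F=|x-Tx|$ that the paper is about. It also establishes the implication ``regularity $\Rightarrow$ PCM$_{ar}$'', which the paper uses in its closing discussion of how much induction is needed.
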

{\bf Proof:} 1) Inspection of the proof of
\cite[Theorem 4.2(1)]{Kohlenbach(moduli)} (see also \cite[Remark 4.3]{Kohlenbach(moduli)}) shows that the claim can be established with induction,
$\Sigma^0_1$-LEM and WKL. WKL in turn is provable in
EL$+$LLPO$+\Pi^0_1-$AC$^{0,0}_{\le 1}$ as follows from the proof of
\cite[Theorem 3]{Kohlenbach(01)}. LLPO trivially follows from
$\Pi^0_1$-LEM and hence - a fortiori - from $\Sigma^0_1$-LEM.
In total the claim of the theorem follows. \\
2) We refine the proof of \cite[Theorem 4.4(2)]{Kohlenbach(moduli)}, which
classically shows that the existence of a modulus of regularity implies
the convergence of bounded monotone sequences of rationals in $[0,1]$
(and hence arithmetical
comprehension ACA$_0$ by \cite[Theorem III.2.2]{Simpson}), to get that intuitionistically the $\varepsilon/\delta$-form
of regularity implies the Cauchy property of bounded monotone sequences
which is known to imply $\Sigma^0_1$-LEM over EL
(see \cite[Theorem 2.(ii)]{Toftdal},
which in turn refers to \cite[5.4.4]{TroelstravanDalen}, and the more recent
\cite{IshiharaNemoto}). Let $(a_n)$ be a nondecreasing sequence of rational
numbers in $[0,1]$ and $f,T:[0,1]\to [0,1]$ be nonexpansive as defined
in the proof of \cite[Theorem 4.4(2)]{Kohlenbach(moduli)} and $x_n:=T^n0.$
Now suppose that the Lipschitz-2 function $F:[0,1]\to \RR, \ F(x):=
|x-Tx|$ is $\forall \varepsilon\exists\delta$-regular and note that $1\in \text{zer}\;F
\not=\emptyset.$ Since $(x_n)$ is Fej\'er monotone w.r.t. $\text{zer}\;F=
Fix(T)$ and asymptotically regular, i.e. $|x_n-Tx_n|\to 0,$ and, in fact, 
with rate of convergence
$n+3$ (see the proof of \cite[Theorem 4.4(2)]{Kohlenbach(moduli)}) 
it follows that $(x_n)$ is a Cauchy sequence: let $k\in\NN$ be fixed
and $n\in\NN$ 
by the $\forall\varepsilon\exists\delta$-regularity be such that for all 
\[ \forall x\in [0,1] \,\left( |x-Tx|<2^{-n}\to \exists p\in Fix(T) \, (|p-x|<2^{-k-1})\right). \]
Then $|x_{n+3}-p|<2^{-k-1}$ for some $p\in Fix(T)$ and so by the Fej\'er
monotonicity of $(x_n)$
\[ \forall m\ge n+3 \, (|x_m-p|<2^{-k-1})\] which implies - as $k$ was
arbitrary - that 
\[ (*)\ \forall k\in\NN \,\exists n\,\forall m,\tilde{m}\ge n\, (|x_m-x_{\tilde{m}}|<2^{-k}). \]
Let $k\in\NN$ be fixed again and $n_k$ be such that $(*)$ holds for $k.$
For $C\in\NN,$ define $n_C:=\max\{ n_k,k+C+3\}.$
Then
\[ |x_{n_C}-Tx_{n_C}|<2^{-k-C} \] and
so - by $T(x)=\frac{1}{2}(x+f(x))$ -
\[ |x_{n_C} -f(x_{n_C})|<2^{-k-C+1} \] which in turn - by the $f_n$-definition
used to define $f$ - implies that
\[ \forall l\le C \, (a_l< x_{n_C}+2^{-k}) \] and so - since $n_C\ge n_k$ - 
\[ \forall l\le C\, (a_l< x_{n_k}+2^{-k+1}). \]
Since $C\in\NN$ was arbitrary, we get
\[ (**)\ \forall l\in\NN \, (a_l< x_{n_k}+2^{-k+1}). \]
Let $l_k\in\NN$ be so large that $2^{-l_k}\cdot n_k<2^{-k}.$ 
Then - reasoning as in \cite{Kohlenbach(moduli)}, p.383, lines 3-7 - 
\[ (***)\ a_{l_k}\ge x_{n_k}-2^{-k}. \]
Indeed,
assume that $a_{l_k}<x_{n_k}-2^{-k}$ and so - since $(a_n)$ is nondecreasing
\[ \forall l\le l_k\, (a_l<x_{n_k}-2^{-k}).\]
By induction on $n$ we show that
\[ (+) \ \forall n\in\NN \, (x_n\le x_{n_k}-2^{-k}+n\cdot 2^{-l_k}): \]
$x_0=0\le a_0\le x_{n_k}-2^{-k}.$ For the induction step (using that
$x_n,a_l\le 1$):
\[ \ba{l} x_{n+1}=\frac{1}{2}\left( x_n
    +\sum\limits^{\infty}_{l=0} 2^{-l-1}\max\{ x_n,a_l\}\right) \\[3mm]
  \stackrel{\text{I.H., assumption}}{\le}
  \frac{1}{2} \left( x_{n_k}-2^{-k}+n\cdot 2^{-l_k} +
    (x_{n_k}-2^{-k}+n\cdot 2^{-l_k})\sum\limits^{\infty}_{l=0}2^{-l-1} +
    \sum\limits^{\infty}_{l=l_k+1} 2^{-l-1}\right)\\[3mm] \hspace*{1cm}=x_{n_k}-2^{-k}+n\cdot
  2^{-l_k}+ \frac{1}{2} 2^{-l_k-1}< x_{n_k}-2^{-k}+(n+1)\cdot 2^{-l_k}. \ea \]
$(+)$ applied to $n:=n_k$ yields that
\[ x_{n_k}\le x_{n_k}-2^{-k}+n_k\cdot 2^{-l_k}< x_{n_k} \]
which is a contradiction and, therefore, establishes $(***).$ \\
So by $(**)$ and $(***)$ together
(using again that $(a_n)$ is nondecreasing) we have shown
that 
\[ \forall l\ge l_k \, \left( a_l\in [x_{n_k}-2^{-k},x_{n_k}+2^{-k+1}]\right)  \]
which yields that $(a_n)$ is a Cauchy sequence. \hfill $\Box$
\\[1mm]
As shown in \cite{Kohlenbach(moduli)}, with classical logic (and WKL),
$\Sigma^0_1$-IA suffices to prove the $\varepsilon/\delta$-regularity 
in the compact case. However, then $\Sigma^0_2$-DNE
\[ \forall f\,\left(\neg\neg \exists n\forall k \,(f(n,k)=0)\to
\exists n\forall k\,(f(n,k)=0)\right)
\] seems to be needed for the proof. To weaken the latter principle to
$\Sigma^0_1$-LEM, one apparently needs a somewhat stronger induction
in order to establish the principle of bounded $\Sigma^0_1$-comprehension
\[ \Sigma^0_1\text{-BCA}: \ \forall f, k\,\exists \sigma\,
\left( lth(\sigma)=k\wedge \forall i<k \,((\sigma)_i=0\leftrightarrow
\exists n \, (f(i,n)=0))\right) \]
which was introduced under the name of AS$^{\Sigma}_1$ in \cite{Parsons(70)}
and studied with the name above in \cite{IshiharaNemoto}.
The situation is, therefore, analogous to that of the Cauchy property
of bounded monotone sequences in $[0,1]$ studied under the name of
PCM$_{ar}$ in \cite{Kohlenbach(PRA)} and \cite{Toftdal} and - under
the name of MCT$^-$ in \cite{IshiharaNemoto}: $\Sigma^0_1$-BCA, which implies
both $\Sigma^0_1$-LEM and $\Sigma^0_1$-IA over EL restricted to
quantifier-free induction (and even weaker systems;
see \cite{IshiharaNemoto}), and which is implied by $\Sigma^0_1$-LEM and IA,
suffices for the proof Theorem \ref{regularity-LEM}.1) but it is open whether
here IA can be weakened to $\Sigma^0_1$-IA. This, of course, does not come
as a surprise since - as the proof of Theorem \ref{regularity-LEM}.2) shows -
metric regularity intuitionistically implies PCM$_{ar}$ (the case of
sequences of reals can easily be reduced to the one for rationals using
rational approximations as in the proof of \cite[Proposition 5.2(1)]{Kohlenbach(PRA)}).
\\[1mm]
{\bf Acknowledgment:} The author is grateful to Pedro Pinto and
Nicholas Pischke for comments
which improved the presentation in this paper. \\ The author was supported by the German Science Foundation (DFG
Project KO 1737/6-2).
\nocite{*}
\bibliographystyle{eptcs}
\bibliography{content-modulus-regularity-eptcs-bib}
\end{document}